\newcommand\motnouv[1]{\emph{#1}}
\newcommand\finprovisoire{
\bibliographystyle{plain}
\bibliography{/Users/bournez/bibliographie/Bib-Files/bournez,/Users/bournez/bibliographie/Bib-Files/perso}
\end{document}
}
\begin{document}

\begin{frontmatter}
\title{
On the Convergence of Population Protocols When Population Goes to Infinity\thanks{Olivier Bournez and Johanne Cohen were partially supported  from ANR Project SOGEA. Olivier Bournez, Johanne Cohen and Xavier Koegler were partially supported  by ANR Project SHAMAN. Xavier Koegler was also partially supported  by ANR Project ALADDIN and by COST Action 295 DYNAMO. } }
\date{\today}



\institute{
{\sc LIX, Ecole Polytechnique}, 91128 Palaiseau Cedex, FRANCE \\
{Olivier.Bournez@lix.polytechnique.fr}
\and
{\sc IECN/UHP}, 
BP 239, 54506 Vandoeuvre-L\`es-Nancy Cedex, FRANCE \\
{\{chassain,Lucas.Gerin\}@iecn.u-nancy.fr}
\and
{\sc LORIA/INRIA-CNRS}, 615 Rue du Jardin Botanique,
54602 Villers-L\`es-Nancy, FRANCE \\
{Johanne.Cohen@loria.fr}
\and
{\sc LIAFA/Paris VII University}, Case 7014
75205 Paris Cedex 13, FRANCE \\
\and 
{\sc Ecole Normale Sup\'erieure},
45, rue d'Ulm, 
75230 Paris cedex 05, FRANCE \\
{koegler@clipper.ens.fr}
}
\author{Olivier Bournez\inst{1} and Philippe Chassaing\inst{2} and Johanne Cohen\inst{3} and Lucas Gerin\inst{2} and Xavier Koegler\inst{4,5}}

\maketitle

\begin{abstract}
  Population protocols have been introduced as a model of sensor
  networks consisting of very limited mobile agents with no control
  over their own movement. A population protocol corresponds to a
  collection of anonymous agents, modeled by finite automata, that
  interact with one another to carry out computations, by updating
  their states, using some rules.

  Their computational power has been investigated under several
  hypotheses but always when restricted to finite size populations. In
  particular, predicates stably computable in the original model have been
  characterized as those definable in Presburger arithmetic.

  We study mathematically the convergence of  population
  protocols when the size of the population goes to infinity. We do so
  by giving general results, that we illustrate through the example of
  a particular population protocol for which we even obtain an
  asymptotic development.

  This example shows in particular that these
  protocols seem to have a rather different computational power
  when a huge population hypothesis is considered.
\end{abstract}

\end{frontmatter}

\newtheorem{rremarque}{Remark}
\newtheorem{mremarque}{Metaremark}
\newtheorem{rtodo}{TO DO:}
\newcommand\N{\mathbb{N}}
\newcommand\R{\mathbb{R}}
\newcommand\mN{\mathcal{N}}

\section{Motivation}

The computational power of networks of finitely many anonymous
resource-limited mobile agents has been investigated in several recent
papers. In particular, the population protocol model, introduced in
\cite{AspnesADFP2004}, consists of a population of finite-state agents
that interact in pairs, where each interaction updates the state of
both participants according to a transition based on the previous
states of the participants. When all agents converge after some finite
time to a common value, this value represents the result of the
computation.

According to survey \cite{PopProtocolsEATCS}, the model corresponds to the following assumptions: anonymous
finite-state agents (the system consists of a large population of
indistinguishable finite-state agents), computations by direct
interactions (an interaction between two agents updates their states
according to a joint transition table), unpredictable interaction
patterns (the choice of interactions is made by an adversary, possibly
limited to pairing only agents in an interaction graph), distributed
input and outputs (the input to a population protocol is distributed
across the initial state of the entire population, similarly the
output is distributed to all agents), convergence rather than
termination (the agent's output are required to converge after some
 time to a common correct value).

 Typically, in the spirit of \cite{AspnesADFP2004} and following
 papers (see again \cite{PopProtocolsEATCS} for a survey), population
 protocols are assumed to (stably) compute predicates: a population
 protocol stably computes a predicate $\phi$, if for any possible
 input $x$ of $\phi$, whenever $\phi(x)$ is true all agents of the
 population eventually stabilize to a state corresponding to $1$, and
 whenever $\phi(x)$ is false, all agents of the population eventually
 stabilize to a state corresponding to $0$.

Predicates stably computable by population protocols in this sense have been
characterized as being precisely the semi-linear predicates, that is
to say those predicates on counts of input agents definable in
first-order Presburger arithmetic \cite{presburger:uvk}. Semilinearity
was shown to be sufficient in \cite{AspnesADFP2004} and necessary in
\cite{AngluinAE2006semilinear}.

Here, we  study a new variant: we assume the number of agents in the population to be huge, or 
close to infinity (we call this \textit{a huge population hypothesis}), and we don't
want to focus on protocols as predicate recognizers, but as computing
functions. We assume outputs to correspond to proportions, which are
clearly the analog of counts whenever the population is infinite or
close to infinity.

We do so by providing general results  that we illustrate by
considering a particular population protocol, that we prove to
converge to a fraction of $\frac{\sqrt{2}}{2}$ agents in a given
state whatever its initial state is. We hence show that some algebraic
irrational values can be computed in this sense. Then we show how the
reasoning behind the proof of convergence of this particular protocol
can be generalized to any such protocol to prove that its behaviour
can be analyzed through use of deterministic differential equations.
We also give an asymptotic development of the convergence in the case
of the protocol computing $\frac{\sqrt{2}}2$.

Our motivation is twofold. First, to prove formally that population
protocol with a huge population hypothesis can be mathematically studied
using population models and ordinary differential equations. Second,
to show that protocols considered with these two hypotheses (huge
population, computing functions and not only predicates), have a
rather different power.

We consider this work as a first step towards understanding which
numbers can be computed by such protocols. Whereas we prove that
$\frac{\sqrt{2}}{2}$ can be computed, and whereas this is easy to see
that computable numbers in this sense must be algebraic numbers of
$[0,1]$, we didn't succeed yet to characterize precisely computable
numbers.

In this more long term objective, the aim of this current work is
first to discuss in which sense one can say that these protocols
compute an irrational algebraic value such as $\frac{\sqrt{2}}{2}$,
and second to study mathematically formally the convergence.

Our discussion is organized as follows. In Section \ref{sec:related},
we discuss 
related
work. In Section \ref{sec:pp}, we recall the model of population
protocols. In Section \ref{sec:system}, we present a particular
population protocol and we explain in which sense we would like to say
that this protocol computes some irrational algebraic value, with a
huge population hypothesis. We do so first by some informal study,
that we justify mathematically in the following three sections. We 
use a general theorem about approximation of diffusions  presented in Section \ref{sec:thoe}. How this theorem
yields the proof of convergence is explained in Section \ref{sec:conv}. Then, in
Section \ref{sec:gene}, we show how  the method
used on this particular example can be extended to prove that the study of
such protocols reduces to the study of differential
systems. The following two sections deal with additional results that
can be given on restricted parts of the model. We prove in Section
\ref{sec:asympt} that this is even possible to use the same theorem to
go further and get an asymptotic development of the convergence on the
example of $\frac{\sqrt{2}}{2}$. Section \ref{sec: otheralg} gives,
without proof, some first results on the type of algebraic numbers
that can be computed in this model using only two possible states for
all agents. Finally, Section \ref{sec:concl} is devoted to a conclusion and a
discussion.

\section{Related Work}
\label{sec:related}

The population protocol model was inspired in part by the work by
Diamadi and Fischer on trust propagation in social networks
\cite{diamadi2001sgs}. The model   is introduced and 
motivated  in \cite{AspnesADFP2004}
by the study of sensor networks in which passive agents are
carried along by other entities. The canonical example given in this
latter paper is the example of sensors attached to a flock of birds.

So far, most of the works on population protocols have concentrated on
characterizing what predicates on the input configurations can be
stably computed in different variants of the models and under various
assumptions. In particular, in   \cite{AspnesADFP2004}, Angluin et al. proved that all semi-linear
predicates can be computed by the original model but left open the question of their exact
power. This was solved in \cite{AngluinAE2006semilinear}, where it has
been proved that no-more predicates can be computed.

Variants of the original model considered so far include restriction to one-way communications,
restriction to particular interaction graphs, random interactions,
self-stabilizing solutions, through population protocols, to classical
problems in distributed algorithmic, the analysis of
various kind of failures of agents an d their consequences, etc. See survey
\cite{PopProtocolsEATCS}.  

As far as we know, a huge population hypothesis in the sense of this
paper, has not been considered yet, except in very recent papers
\cite{ChatzigiannakisS08,ChatzigiannakisS08rr}. In these papers, the
authors also study the dynamics and stability of probabilistic
population protocols via an ordinary differential equation
approach. They introduce several continuous models inspired by population models
in mathematics and by population protocols. Some relations between
continuous models are proved. Results from theory of stability for
continuous systems are used to discuss the stability of protocols. The
models considered in \cite{ChatzigiannakisS08,ChatzigiannakisS08rr} are not provably connected, in these papers, to classical finite population protocols. We prove in this paper that
there is a strong relation between classical finite population
protocols and models given by ordinary differential equations, and
hence with the models of \cite{ChatzigiannakisS08,ChatzigiannakisS08rr}. We also prove that the macroscopic view given by ordinary differential equations and by population models
is provably the weak limit of classical
finite population protocols when the size of the population goes to
infinity. Compared with \cite{ChatzigiannakisS08,ChatzigiannakisS08rr}, we focus on computability by these models, whereas \cite{ChatzigiannakisS08,ChatzigiannakisS08rr} focus on stability questions, and do not discuss computability issues.


Notice that we assume that interactions happen in a probabilistic way,
according to some uniform law. In the original population protocol
model, only specific fairness hypotheses were assumed on possible
adversaries \cite{AspnesADFP2004}. When the size of the population
goes to infinity, uniform sampling of agents seems to us the most
natural way to extend the fairness hypothesis. This assumption is
consistent with the interpretation of agents as autonomous biological
entities moving at random. Notice that this notion of adversary has
already been investigated for finite state systems
\cite{PopProtocolsEATCS}.


The results proved in this paper can be considered as a macroscopic
abstraction of a system given by microscopic rules of evolutions. See
survey \cite{givon2004emd} for general discussions about extraction of
macroscopic dynamics.

Whereas the ordinary differential equation \eqref{odee1} can be
immediately abstracted in a physicist approach from the dynamic
\eqref{ode:dynamic}, the formal mathematical equivalence of the two
approaches is not so trivial, and is somehow a strong motivation of
this work.

Actually, these problems seem to arise in many macroscopic
justifications of models from their microscopic description in
experimental science. See for example the very instructive discussion
in \cite{Les06} about assumptions required for the justification of
the Lotka-Volterra (predator-prey) model of population dynamics. In
particular, observe that microscopic correlations must
be neglected (i.e.  $E[XY]=E[X] E[Y]$ is needed, where $E$ denotes the
expectation). With a rather similar hypothesis (here assuming
$E[p^2]=E[p]^2$), dynamic \eqref{odee1} is clear from rules
\eqref{ode:dynamic}. Somehow, we prove here that this last hypothesis is not
necessary for our system. 

The techniques used in this paper are based on weak convergence
techniques, introduced in \cite{stroock1979mdp}, relating a stochastic
differential equation (whose solutions are called diffusions) to  approximations
 by a family of Markov processes. 
More specifically, we use a weaker form
\cite{comets2006csm} of a theorem from \cite{stroock1979mdp}.
Refer also to \cite{durrett1996scp} for an introduction to these techniques.

\section{Population Protocols}
\label{sec:pp}

We now recall definitions from \cite{AspnesADFP2004}. A protocol is given by $(Q,\Sigma,\iota,\omega,\delta)$ with the
following components. $Q$ is a finite set of \motnouv{states}.
$\Sigma$ is a finite set of \motnouv{input symbols}.  $\iota: \Sigma
\to Q$ is the initial state mapping, and $\omega: Q \to \{0,1\}$ is
the individual output function. $\delta \subseteq Q^4$ is a joint
transition relation that describes how pairs of agents can
interact. Relation $\delta$ is sometimes described by listing all
possible interactions using the notation $(q_1,q_2) \to (q'_1,q'_2)$,
or even the notation $q_1q_2 \to q'_1 q'_2$, 
for $(q_1,q_2,q'_1,q'_2) \in \delta$ (with the convention that
$(q_1,q_2) \to (q_1,q_2)$ when no rule is specified with $(q_1,q_2)$
in the left hand side). 


Computations of a protocol proceed in the following way. The
computation takes place among $n$ \motnouv{agents}, where $n \ge 2$. A
\motnouv{configuration} of the system can be described by a vector of
all the agent's states. The state of each agent is an element of $Q$. Because agents
with the same states are indistinguishable, each configuration can be
summarized as an unordered multiset of states, and hence of elements
of $Q$.

Each agent is given initially some input value from $\Sigma$: Each agent's initial
state is determined by applying $\iota$ to its input value. This
determines the initial configuration of the population.

An execution of a protocol proceeds from the initial configuration by
interactions between pairs of agents. Suppose that two agents in state
$q_1$ and $q_2$ meet and have an interaction. They can change into
state $q'_1$ and $q'_2$ if $(q_1,q_2,q'_1,q'_2)$ is in the transition
relation $\delta$.  If $C$ and $C'$ are two configurations, we write
$C \to C'$ if $C'$ can be obtained from $C$ by a single interaction of
two agents: this means that $C$ contains two states $q_1$ and $q_2$ and $C'$ is
obtained by replacing $q_1$ and $q_2$ by $q'_1$ and $q'_2$ in $C$,
where $(q_1,q_2,q'_1,q'_2) \in \delta$. An \motnouv{execution} of the
protocol is an infinite sequence of configurations
$C_0,C_1,C_2,\cdots$, where $C_0$ is an initial configuration and $C_i
\to C_{i+1}$ for all $i\ge0$. An execution is \motnouv{fair} if for
all configurations $C$ that appears infinitely often in the execution,
if $C \to C'$ for some configuration $C'$, then $C'$ appears
infinitely often in the execution.

At any point during an execution, each agent's state determines its
output at that time. If the agent is in state $q$, its output value is
$\omega(q)$. The configuration output is $0$ (respectively $1$) if all
the individual outputs are $0$ (respectively $1$). If the individual
outputs are mixed $0$s and $1s$ then the output of the configuration
is undefined. 

Let $p$ be a predicate over multisets of elements of
$\Sigma$. Predicate $p$ can be considered as a function whose range is
$\{0,1\}$ and whose domain is the collection of these multisets. The predicate is said to be computed by the protocol if,  for every  multiset $I$, and
every fair execution that starts from the initial configuration
corresponding to $I$, the output value of every agent eventually
stabilizes to $p(I)$.

The following was proved in
\cite{AspnesADFP2004,AngluinAE2006semilinear}

\begin{theorem}[\cite{AspnesADFP2004,AngluinAE2006semilinear}] A
  predicate is computable in the population protocol model if and only
  if it is semilinear.
\end{theorem}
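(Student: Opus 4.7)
The plan is to prove the two directions separately, following the two cited papers.

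For the sufficiency direction, the key observation is that the class of predicates computable by population protocols is closed under boolean combinations: given two protocols with state sets $Q_1$ and $Q_2$, their product on $Q_1\times Q_2$ runs both simultaneously, and the output map can realize any boolean combination of the component outputs. By the standard characterization of semilinear sets through Presburger arithmetic and quantifier elimination, it thus suffices to exhibit protocols for the atomic predicates of Presburger arithmetic, namely threshold predicates $\sum_i a_i x_i \ge k$ and modular predicates $\sum_i a_i x_i \equiv k \pmod{m}$. For thresholds I would design a leader-election style protocol: each agent starts in a state encoding the weight $a_i$ of its input symbol, and pairwise interactions aggregate these weights onto a single ``leader'' agent by zeroing out the partner and transferring a contribution saturated at the window $[-k,k]$; once the accumulated value saturates, a broadcast rule disseminates the correct boolean answer to every agent. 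Modular predicates are handled in a similar spirit, the accumulation taking place in $\mathbb{Z}/m\mathbb{Z}$, which requires no saturation since the additive monoid is already finite.

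For the necessity direction the argument is substantially more delicate. Identifying configurations with vectors in $\mathbb{N}^Q$, one studies the reachability relation and, for each output value, the set of configurations from which every fair execution eventually stabilizes to that value. The central tool is Dickson's lemma, which makes $\mathbb{N}^Q$ a well-quasi-order under coordinatewise comparison and therefore forbids infinite antichains. Using this, together with a pumping argument, I would show that if some input configuration stably produces output $1$, then every sufficiently large configuration matching the appropriate linear pattern also stabilizes to $1$; symmetrically, any attempt to build a sequence of counterexamples would give rise to an infinite antichain, contradicting Dickson's lemma. Collecting these minimal witnesses together with their periodic offsets exhibits the truth set of the predicate as a finite union of linear sets, i.e.\ as a semilinear subset of $\mathbb{N}^{\Sigma}$.

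The main obstacle, and the real content of the theorem, is the necessity direction. Its proof demands a careful analysis of reachability in this commutative well-quasi-ordered setting and a subtle pumping argument that accounts for the fact that \emph{stable} computation means every configuration reachable from a given point yields the correct output, not merely that some execution converges to it. The sufficiency direction is by comparison a constructive protocol design exercise, reducing via boolean closure to the two explicit constructions sketched above and guided directly by Presburger quantifier elimination.
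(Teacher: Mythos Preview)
The paper does not prove this theorem at all: it is stated as a cited result, with sufficiency attributed to \cite{AspnesADFP2004} and necessity to \cite{AngluinAE2006semilinear}, and no argument is given beyond the citation. So there is no ``paper's own proof'' to compare against.

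Your sketch is broadly faithful to what those cited works do. The sufficiency direction is essentially right: boolean closure via product construction, then explicit protocols for threshold and modular atomic predicates, is precisely the strategy of \cite{AspnesADFP2004}. For necessity your outline names the right ingredients (Dickson's lemma, the well-quasi-order on $\mathbb{N}^Q$, analysis of stable configurations), but the actual argument in \cite{AngluinAE2006semilinear} is considerably more structured than ``a pumping argument plus minimal witnesses.'' In particular, the proof there proceeds by showing that the set of output-stable configurations is semilinear, that stability is preserved under certain pumping operations tied to the transition structure, and then that the preimage under $\iota$ of the stably-correct configurations is semilinear; the antichain contradiction you describe is not really the mechanism. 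You acknowledge the gap yourself, so this is not a criticism of correctness so much as a note that the necessity sketch, as written, would not yet convince a reader who did not already know the proof.
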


Recall that semilinear sets are known to correspond to predicates on
counts of input agents definable in first-order Presburger arithmetic
\cite{presburger:uvk}.

\section{A Simple Example}
\label{sec:system}

Consider the following population protocol, with $Q=\{+,-\}$, and the following 
joint transition relation.





\begin{equation} \label{ode:dynamic}
\left\{
\begin{array}{lll}
++ & \to & +- \\
+- & \to & ++\\
-+ & \to & ++ \\
-- & \to & +- \\
\end{array}
\right.
\end{equation}


Using previous (classical) definition, this protocol does not stably
compute anything. Indeed, if we put aside the special configuration where
all agents are in state $-$ which is immediately left in any next
round, any configuration is reachable from any configuration.

However, suppose that we want to discuss the limit of the proportion $p(k)$ of agents in 
state $+$ in the population at discrete time $k$. If $n_+(k)$ denotes
the number of agents in state $+$, and $n_-(k)=n-n_+(k)$ the number of agents in state
$-$, $$p(k)=\frac{n_+(k)}{n}.$$


From now on, we suppose that at each time step, two different agents are
sampled uniformly among the $n$ particles, independently from the past.
Since we are dealing with $n$ indistinguishable agents, the population protocol is completely
described by the number of agents in state $+$.
We are then reduced to determine the evolution of the Markov chain
$$
\left(p(k)\right)_{k\in \mathbb{N}}\in \left\{\frac{0}{n},\frac{1}{n},\dots,\frac{n}{n}\right\}.
$$
The above discussion ensures that $(p(k))$ is an irreducible Markov chain in
$\{\frac{1}{n},\dots,\frac{n}{n}\}$. 
Let us now compute the transition probabilities of this irreducible Markov chain.
We have
$$
p(k+1)-p(k)\in\{-1,1\}.
$$
Then, we have to determine for each $i=1,2,\dots,n$
\begin{align*}
\pi^{(n)}(\tfrac{i}{n}\to \tfrac{i-1}{n})&:=\mathbb{P}\left(p(k+1)=\frac{i-1}{n}\ |\ p(k)=\frac{i}{n} \right),\\
\pi^{(n)}(\tfrac{i}{n}\to\tfrac{i+1}{n})&:=\mathbb{P}\left(p(k+1)=\frac{i+1}{n}\ |\ p(k)=\frac{i}{n} \right).
\end{align*}
Assume that $p(k)=i/n$ : $p(k)$ decreases
only if the two agents  sampled are in state $+$. That is,
$$
\pi^{(n)}(\tfrac{i}{n}\to \tfrac{i-1}{n})=\frac{\binom{i}{2}}{\binom{n}{2}}=\frac{i(i-1)}{n(n-1)}.
$$
In any other case, $p(k)$ increases by one :
\begin{align*}
\pi^{(n)}(\tfrac{i}{n}\to \tfrac{i+1}{n})&=1-\pi^{(n)}(\tfrac{i}{n}\to \tfrac{i-1}{n})\\
&=1-\frac{i(i-1)}{n(n-1)}.
\end{align*}
A consequence of the ergodic theorem is that the chain $(p(k))$
admits a unique stationary distribution $\mu$. By definition, it is the only application
$$
\mu:\left\{\frac{1}{n},\dots,\frac{n}{n}\right\}\to [0,1]
$$
such that
\begin{enumerate}
\item $\sum_{i=1}^n  \mu(i/n)=1$.
\item $\mu$ satisfies the \emph{balance equation}, \emph{i.e.} for each $i$
$$
\mu(\frac{i}{n}) = 
\mu(\frac{i-1}{n}) \pi^{(n)}(\tfrac{i-1}{n}\to \tfrac{i}{n}) + \mu(\frac{i+1}{n}) \pi^{(n)}(\tfrac{i+1}{n}\to\tfrac{i}{n}).
$$

\end{enumerate}
We do not pay attention to the exact expression of $\mu$. We only notice that, as the unique solution
to a rational system, it is an element of $\mathbb{Q}^n$. Hence, its mean $\sum_i \mu(i/n)i/n$ is
a rational number, that we denote $p^{(n)}$.

A second consequence of the ergodic theorem is the following convergence :
$$
\frac{p(1) + p(2) + ... + p(k)}{k} \stackrel{k\to\infty}{\rightarrow} p^{(n)},\text{ almost surely.} 
$$
The purpose of the rest of the discussion is to show that however, when $n$ goes to infinity, the 
mean value of $p(k)$ converges to the irrational number $\sqrt{2}/2$.
To see why this can be expected,  
observe that $$\pi^{(n)}(\tfrac{i}{n}\to \tfrac{i-1}{n}) = p^2(k) \frac{n}{n-1} - p(k) \frac{1}{n-1},$$
and write

\begin{equation} \label{eq:unn}
  \begin{array}[h]{rcl}
\mathbb{E}[n_+(k+1)-n_+(k)\ |\ n_+(k)]
&= &  \pi^{(n)}(\tfrac{i}{n}\to \tfrac{i+1}{n}) -   \pi^{(n)}(\tfrac{i}{n}\to \tfrac{i-1}{n}) \\
& = &  1 - 2 \pi^{(n)}(\tfrac{i}{n}\to \tfrac{i-1}{n}) \\
& = &    1 - 2 p^2(k) \frac{n}{n-1} +  p(k) \frac{2}{n-1} \\
  \end{array}
\end{equation}


 From this, we can derive (not yet rigourously) the asymptotic behavior of $p(k)$. Take indeed 
$n$ large, so that the right-hand term is close to $1-2p(k)^2$. Now, when $k$ goes large, if there is some convergence of the mean proportion of $+$, the system must concentrate on 
configurations that does not create or destroy +, in mean. Thus, $1-2p(k)^2$ must vanish
and $p(k) \approx \sqrt{2}/2$.

Hence the remaining problem is to justify and discuss mathematically the convergence.




\section{A General Theorem about Approximation of Diffusions}
\label{sec:thoe}

We will use the following theorem from \cite{stroock1979mdp}. We use
here the formulation of it in \cite{comets2006csm} (Theorem 5.8 page
96).

Suppose that for all integers $n\ge 1$, we have an homogeneous Markov
chain $(Y_k^{(n)})$ in $\R^d$ with transition kernel $\pi^{(n)}(x,dy)$, meaning that the law of $Y_{k+1}^{(n)}$, conditioned on
$Y_0^{(n)},\cdots,Y_k^{(n)}$, depends only on $Y_{k}^{(n)}$ and is
given, for all Borelian $B$, by
$$P(Y_{k+1}^{(n)} \in B|Y_k^{(n)})=\pi^{(n)}(Y_k^{(n)},B),$$
almost surely.

Define for $x \in \R^d$, 
\begin{align*}
  b^{(n)}(x) & = n \int (y-x) \pi^{(n)}(x,dy), \\
a^{(n)}(x) &= n \int (y-x)  (y-x)^* \pi^{(n)}(x,dy), \\
K^{(n)}(x) &=  n \int (y-x)^3 \pi^{(n)}(x,dy), \\
\Delta_\epsilon^{(n)}(x) &= n \pi^{(n)}(x,B(x,\epsilon)^c),
\end{align*}
where $B(x,\epsilon)^c$ denotes the complement of the ball with radius  $\epsilon$, centered at $x$. In other words,
$$
b^{(n)}(x)=n\mathbb{E}_x [ (Y_1-x) ],
$$
and
$$
a^{(n)}(x) = n\mathbb{E}_x[(Y_1-x)(Y_1-x)^*]
$$
where $\mathbb{E}_x$ stands for ``expectation starting from $x$", that is,
$$
\mathbb{E}_x [ (Y_1-x) ]=\mathbb{E}[ (Y_1-x) | Y_0=x].
$$
The coefficients $b^{(n)}$ and $a^{(n)}$ can be interpreted as the instantaneous drift and the variance (or matrix of covariance) of $X^{(n)}$.

Define 
 $$X^{(n)}(t) = Y^{(n)}_{\lfloor nt \rfloor}  + (nt - \lfloor nt \rfloor) (Y^{(n)}_{\lfloor nt+1 \rfloor} - Y^{(n)}_{\lfloor nt \rfloor}).
  $$

\begin{theorem}[Theorem 5.8, page 96 of \cite{comets2006csm}] \label{th:theo1}
Suppose that there exist some continuous functions $a,b$, such that for all $R<+\infty$,
\begin{align*}
  \lim_ {n\to \infty} sup_{|x| \le R} |a^{(n)}(x)-a(x)| &= 0 \\
\lim_ {n\to \infty} sup_{|x| \le R} |b^{(n)}(x)-b(x)| &= 0\\ 
\lim_{n\to \infty} sup_{|x| \le R} \Delta_\epsilon^{(n)} &= 0, \forall \epsilon >0\\
\sup_{|x| \le R} K^{(n)}(x) &< \infty.
\end{align*}

With $\sigma$ a matrix such that $\sigma(x)\sigma^*(x)= a(x)$, $x \in \R^d$, we suppose that the stochastic differential equation 
\begin{equation}
\label{eq:diffusion}
dX(t) = b (X(t)) dt + \sigma(X(t)) dB(t), ~~~~ X(0)=x,
\end{equation}
has a unique weak solution for all $x$. This is in particular the case, if it admits a unique strong solution. 

Then for all sequences of initial conditions  $Y_0^{(n)} \to x$, the sequence of random processes $X^{(n)}$ \emph{converges in law} to the diffusion given by \eqref{eq:diffusion}. In other words, for all function $F: \mathcal{C}(\R^+,\R) \to \R$ bounded and continuous, one has $$\lim_{n \to \infty} E[F(X^{(n)})] = E[F(X)].$$
\end{theorem}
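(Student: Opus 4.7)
The plan is to adapt the classical martingale problem approach of Stroock and Varadhan. The overall strategy breaks into three steps: (i) prove tightness of the sequence of laws of $(X^{(n)})$ in $\mathcal{C}(\R^+,\R^d)$, (ii) show that any subsequential weak limit solves the martingale problem associated with the generator $L f(x) = \frac{1}{2} \sum_{i,j} a_{ij}(x)\,\partial^2_{ij} f(x) + \sum_i b_i(x)\,\partial_i f(x)$, and (iii) invoke the assumed weak uniqueness of the SDE \eqref{eq:diffusion} (which is equivalent, via the Stroock-Varadhan correspondence, to uniqueness of the martingale problem for $L$) to conclude that the whole sequence converges in law to the diffusion $X$.

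For step (i), I would proceed by localization. Fix $R<\infty$ and introduce the exit time $\tau_R^{(n)}$ of $X^{(n)}$ from the ball of radius $R$. On $[0,\tau_R^{(n)}]$, the uniform convergences $b^{(n)}\to b$ and $a^{(n)}\to a$ on compacts (so both are uniformly bounded in $n$) together with the control $\sup_{|x|\le R} K^{(n)}(x)<\infty$ and $\Delta_\epsilon^{(n)}\to 0$ yield Kolmogorov-type moment estimates of the form $\mathbb{E}\bigl[|X^{(n)}(t)-X^{(n)}(s)|^2\wedge 1\bigr] \le C_R\,|t-s|$. This gives tightness on each compact time interval via the Kolmogorov-Centsov criterion; a standard non-explosion argument then transfers tightness from $[0,\tau_R^{(n)}]$ to all of $\R^+$.

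For step (ii), introduce the rescaled discrete generator $L_n f(x) = n\int (f(y)-f(x))\,\pi^{(n)}(x,dy)$ for test functions $f\in C^3_c(\R^d)$. Splitting the integral over $B(x,\epsilon)$ and its complement and Taylor-expanding $f$ to second order gives
\[
L_n f(x) = b^{(n)}(x)\cdot \nabla f(x) + \tfrac{1}{2}\,\mathrm{tr}\bigl(a^{(n)}(x)\,\nabla^2 f(x)\bigr) + r^{(n)}_\epsilon(x),
\]
where the remainder is bounded uniformly on $\{|x|\le R\}$ by a contribution of order $\epsilon\,\|\nabla^3 f\|_\infty \sup_{|x|\le R} K^{(n)}(x)/n$ from jumps inside $B(x,\epsilon)$ and by $2\|f\|_\infty\,\sup_{|x|\le R}\Delta_\epsilon^{(n)}(x)$ from jumps outside. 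Letting first $n\to\infty$ and then $\epsilon\to 0$ shows $L_n f \to L f$ uniformly on compacts. Since $M^{(n)}_f(t) := f(X^{(n)}(t))-f(X^{(n)}(0))-\int_0^t L_n f(X^{(n)}(s))\,ds$ is, up to an interpolation error that vanishes as $n\to\infty$, a martingale for the canonical filtration of $Y^{(n)}$, passing to the limit along any weakly convergent subsequence identifies the limit $X$ as a solution of the martingale problem for $L$, hence of the SDE \eqref{eq:diffusion}.

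The main obstacle I anticipate is step (i). The piecewise-linear interpolation between $Y^{(n)}_{\lfloor nt\rfloor}$ and $Y^{(n)}_{\lfloor nt\rfloor+1}$ must be carefully absorbed in the increment estimates, and the localization needs attention because $a,b$ are only assumed continuous, so one must exclude finite-time explosion of the limiting diffusion before using the hypothesis of weak uniqueness. Once tightness and the uniform convergence $L_n f\to L f$ are established, the identification of the limit and the appeal to uniqueness are essentially formal.
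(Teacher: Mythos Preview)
The paper does not prove this theorem at all: it is quoted as a black box from the textbook of Comets and Meyre (Theorem~5.8, page~96), which in turn is a version of the Stroock--Varadhan diffusion-approximation theorem. There is therefore no proof in the paper against which to compare your attempt.

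That said, your sketch follows exactly the standard route by which such results are proved in the cited references (tightness via moment bounds and localization, identification of subsequential limits through the martingale problem for the generator $L$, then uniqueness). Two small technical points are worth flagging if you intend to flesh this out. First, in step~(ii) the near-jump remainder on $B(x,\epsilon)$ is more naturally controlled using $|y-x|^3\le \epsilon|y-x|^2$, which gives a bound of order $\epsilon\,\mathrm{tr}\,a^{(n)}(x)$ rather than one involving $K^{(n)}(x)/n$; your expression for the remainder does not quite match the definitions. Second, the quantity $K^{(n)}(x)=n\int(y-x)^3\pi^{(n)}(x,dy)$ as written in the paper is a \emph{signed} third moment in dimension one (and not even scalar in higher dimension), so its boundedness does not directly control absolute third moments; in the textbook version one typically assumes control of $n\int|y-x|^p\pi^{(n)}(x,dy)$ for some $p>2$, and you would need to adjust the hypothesis accordingly in a rigorous write-up. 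These are wrinkles in the statement as transcribed rather than flaws in your strategy, which is the correct one.
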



\section{Proving Convergence of Previous Simple Example}
\label{sec:conv}

Consider $Y_i^{(n)}$ as the homogeneous Markov chain corresponding to
$p(k)$, when $n$ is fixed. From previous discussions, $\pi^{(n)}(x,.)$
is a weighted sum of two Dirac that weight $x-\frac1n$ and
$x+\frac1n$, with respective probabilities $\pi_{-1}$ and $\pi_{+1}$, whenever $x$ is of type $\frac{i}{n}$ for some $i$.

From Equation \eqref{eq:unn} we have
\begin{equation} \label{eq:truc}
 E[p(k+1)-p(k) | p(k)] =  \frac1n(1 - 2 p(k)^2 \frac{n}{n-1}  + p(k) \frac2{n-1}), 
\end{equation}
which yields
$$b^{(n)}(x) = 1 - 2 p(k)^2 \frac{n}{n-1}  + p(k) \frac2{n-1},$$
when $x=i/n$. Now, clearly $(p(k+1)-p(k))^2= \frac{1}{n^2}$, and hence
$$a^{(n)}(x) = \frac1n,$$
when $x=i/n$. Taking $a(x)=0$ and $b(x)=1-2x^2$, we get
$$\lim_ {n\to \infty} sup_{|x| \le R} |a^{(n)}(x)-a(x)| = 0$$
$$\lim_ {n\to \infty} sup_{|x| \le R} |b^{(n)}(x)-b(x)| = 0$$
for all $R < +\infty$. Since the jumps of $Y^{(n)}$ are bounded in absolute value by $\frac1{ n}$, $\Delta_\epsilon^{(n)}$ is null, as soon as $\frac1{ n}$ is smaller than $\epsilon$, and so
$$\lim_{n\to \infty} sup_{|x| \le R} \Delta_\epsilon^{(n)} = 0, \forall \epsilon >0.$$
Finally,
$$\sup_{|x| \le R} K^{(n)}(x) < \infty$$ is easy to establish.

Now, (ordinary and deterministic) differential equation 
\begin{equation}
\label{odee1} 
dX(t) = (1-2X^2) dt
\end{equation} has a unique solution for any initial condition. 

It follows from above theorem that the sequence of random processes $X^{(n)}$ defined by
$$X^{(n)}(t) = Y^{(n)}_{\lfloor nt \rfloor}  + (nt - \lfloor nt \rfloor) (Y^{(n)}_{\lfloor nt+1 \rfloor} - Y^{(n)}_{\lfloor nt \rfloor})
  $$
  converges in law to the unique solution of differential equation
  \eqref{odee1}.

  Clearly, all solutions of ordinary differential equation
  $\eqref{odee1}$ converge to $\frac{\sqrt 2}{2}$.  Doing the change of variable $Z(t)=X(t)-\frac{\sqrt 2}{2}$, we get
\begin{equation}
\label{ode11}
dZ(t)= (-2 Z^2 + 2 \sqrt 2 Z) dt,
\end{equation}
that converges to $0$.

Coming back to $p(k)$ using definition of $X^{(n)}(t)$, we hence get
\begin{theorem}\label{Th : 3}
  We have for all $t$, $$p(\lfloor nt \rfloor) = \frac{\sqrt2}{2} +
  Z_n(t),$$ where $Z_n(t)$ converges in law when $n$ goes to infinity
  to the (deterministic) solution of ordinary differential
  \eqref{ode11}. Solutions of this ordinary differential equation go
  to $0$ at infinity.
\end{theorem}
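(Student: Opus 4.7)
The plan is to apply Theorem~\ref{th:theo1} directly to the Markov chain $Y_k^{(n)} := p(k)$ taking values in $\{0/n,1/n,\ldots,n/n\} \subset \R$, whose transition kernel $\pi^{(n)}$ was computed in Section~\ref{sec:system}. First I would evaluate the four coefficients of the theorem. Since the jumps are $\pm 1/n$ with probabilities $\pi^{(n)}(x \to x\pm 1/n)$ given in Section~\ref{sec:system}, Equation~\eqref{eq:truc} yields
\[
b^{(n)}(x) = 1 - 2x^2 \tfrac{n}{n-1} + x \tfrac{2}{n-1},
\]
which converges uniformly on every bounded set to $b(x) := 1-2x^2$. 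Since the squared jump is identically $1/n^2$, one has $a^{(n)}(x) = 1/n \to 0 =: a(x)$ uniformly; likewise $K^{(n)}(x) = n \cdot 1/n^3 = 1/n^2$ is uniformly bounded. Finally, because every jump has absolute value $1/n$, $\Delta_\epsilon^{(n)}(x) = 0$ as soon as $n > 1/\epsilon$, and the third condition of the theorem is automatic.

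Second, with $a \equiv 0$ one may take $\sigma \equiv 0$, so the stochastic differential equation~\eqref{eq:diffusion} degenerates to the (deterministic) ordinary differential equation~\eqref{odee1}, namely $dX/dt = 1 - 2X^2$. The right-hand side is polynomial, hence locally Lipschitz, so Cauchy--Lipschitz gives a unique global strong solution for every initial condition in $[0,1]$; in particular the unique-weak-solution hypothesis of Theorem~\ref{th:theo1} is satisfied. Since $p(0) \in [0,1]$ is a valid starting point, Theorem~\ref{th:theo1} applies and the interpolated process $X^{(n)}(t) = Y^{(n)}_{\lfloor nt\rfloor} + (nt-\lfloor nt\rfloor)(Y^{(n)}_{\lfloor nt\rfloor+1}-Y^{(n)}_{\lfloor nt\rfloor})$ converges in law to the (deterministic) solution of~\eqref{odee1}.

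Third, I would perform the change of variable $Z(t) := X(t) - \sqrt{2}/2$: direct substitution into $dX/dt = 1 - 2X^2 = -2(X-\sqrt{2}/2)(X+\sqrt{2}/2)$ gives exactly~\eqref{ode11}. The corresponding $Z_n(t) := X^{(n)}(t) - \sqrt{2}/2$ then satisfies $p(\lfloor nt\rfloor) = \sqrt{2}/2 + Z_n(t)$ (up to the linear interpolation term, which vanishes at integer multiples of $1/n$), and converges in law to $Z(t)$. Asymptotic convergence of $Z(t)$ to $0$ is a one-line phase-line analysis: $\sqrt{2}/2$ is a globally attracting equilibrium of~\eqref{odee1} on $(-\sqrt{2}/2,+\infty)$, an interval containing all physically meaningful initial proportions $[0,1]$, so $Z(t) \to 0$ monotonically as $t \to \infty$.

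The only non-routine step is the verification that the degenerate case $a \equiv 0$ is genuinely covered by Theorem~\ref{th:theo1} and yields convergence in law to the deterministic ODE solution; this follows because a deterministic ODE is trivially a ``weak solution'' of~\eqref{eq:diffusion} with $\sigma \equiv 0$, and uniqueness is provided by Cauchy--Lipschitz. Everything else is algebraic bookkeeping already carried out, in effect, in Section~\ref{sec:conv}.
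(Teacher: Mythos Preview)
Your proposal is correct and follows essentially the same route as the paper: apply Theorem~\ref{th:theo1} to the chain $p(k)$, compute $b^{(n)}\to 1-2x^2$ and $a^{(n)}=1/n\to 0$, note the jump bound kills $\Delta_\epsilon^{(n)}$, observe that the limiting SDE degenerates to the deterministic ODE~\eqref{odee1}, and then shift by $\sqrt2/2$. You add a little more detail than the paper (explicit $K^{(n)}=1/n^2$, an explicit appeal to Cauchy--Lipschitz for uniqueness, and the phase-line remark for $Z(t)\to 0$), but the structure and every substantive step coincide with Section~\ref{sec:conv}.
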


This implies that $p(k)$ must converge to
$\frac{\sqrt2}{2}$ when $k$ and $n$ go to infinity.

\section{Generalization To General Population Protocols}\label{sec:gene}

We will now generalize the reasoning made on this particular example
in order to prove that the behaviour of any population protocol can be
approximated by a deterministic differential equation in a similar
way.

Transition rules of a population protocol are of the form:

$$ q~q' \rightarrow \delta_1(q,q')~\delta_2(q,q')$$ for all $(q_1, q_2)\in Q^2$.

As previously, we consider pairwise interactions between two agents
chosen randomly according to an uniform law in a population of size
$n$.

Let us define the Markov chain $Y_i^{(n)}$ corresponding to the vector
of $\mathbb{R}^{Q}$ whose components are the proportions of agents in
the different states and

 $$X^{(n)}(t) = Y^{(n)}_{\lfloor nt \rfloor}  + (nt - \lfloor nt \rfloor) (Y^{(n)}_{\lfloor nt+1 \rfloor} - Y^{(n)}_{\lfloor nt \rfloor}).
  $$

\begin{theorem}\label{th:generalisation}
  Le $b$ be the function defined by :

$$b(x) = \sum_{(q,q')\in Q²} x_q x_{q'} (-(e_q +e_q') +
e_{\delta_1(q,q')} + e_{\delta_2(q,q')}) $$ where $(e_q)_{q\in Q}$ is
the canonical base of $\mathbb{R}^Q$.

Then for all sequences of initial conditions  $Y_0^{(n)} \to x$, the
sequence of random processes $X^{(n)}$ converges in law to the
solution of the ordinary differential equation:

\begin{equation}
\label{eq:diffusion2}
dX(t) = b (X(t)) dt , ~~~~ X(0)=x,
\end{equation}
\end{theorem}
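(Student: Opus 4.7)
The plan is to apply Theorem \ref{th:theo1} with the choice $a \equiv 0$, so that the matrix $\sigma$ in \eqref{eq:diffusion} vanishes and the limiting stochastic differential equation collapses to the deterministic ordinary differential equation \eqref{eq:diffusion2}. This mirrors the strategy of Section \ref{sec:conv}, but the Markov chain $Y^{(n)}_k$ now lives on the simplex of $\R^Q$ whose coordinates are the proportions $n_q/n$ of agents in each state $q$.

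First I would compute the one-step drift. At each step, an ordered pair of distinct agents is drawn uniformly; the probability that this pair is in states $(q,q')$ equals $\frac{n_q(n_{q'}-\mathbf{1}_{q=q'})}{n(n-1)}$, and upon firing the rule $q\,q' \to \delta_1(q,q')\,\delta_2(q,q')$ the proportion vector changes by $\frac{1}{n}\bigl(-(e_q+e_{q'})+e_{\delta_1(q,q')}+e_{\delta_2(q,q')}\bigr)$. Multiplying $\mathbb{E}_x[Y_1 - x]$ by $n$ yields
$$
b^{(n)}(x) = \sum_{(q,q')\in Q^2} \frac{n_q(n_{q'}-\mathbf{1}_{q=q'})}{n(n-1)} \bigl(-(e_q+e_{q'})+e_{\delta_1(q,q')}+e_{\delta_2(q,q')}\bigr),
$$
whose coefficient converges uniformly on the compact simplex to $x_q x_{q'}$, producing the $b$ of the statement. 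Since the jump of $Y^{(n)}$ is bounded in norm by $2/n$, the covariance matrix $a^{(n)}(x) = n\, \mathbb{E}_x[(Y_1-x)(Y_1-x)^*]$ is of order $1/n$ and converges uniformly to $0 = a(x)$.

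The two remaining hypotheses of Theorem \ref{th:theo1} follow from the same $2/n$ jump bound: for any $\epsilon > 0$ one has $\Delta_\epsilon^{(n)} \equiv 0$ as soon as $n > 2/\epsilon$, and $K^{(n)}(x) = n\, \mathbb{E}_x[|Y_1-x|^3]$ is $O(1/n^2)$, hence uniformly bounded. For the unique-solution hypothesis, observe that $b$ is a polynomial (indeed quadratic) vector field, so it is globally Lipschitz on any bounded set; Cauchy-Lipschitz then delivers a unique strong solution of \eqref{eq:diffusion2} from every initial condition, which is a fortiori a unique weak solution of the degenerate SDE with $\sigma \equiv 0$. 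Invoking Theorem \ref{th:theo1} finally gives the convergence in law of $X^{(n)}$ to the solution of \eqref{eq:diffusion2}.

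The main obstacle is not conceptual but organisational: one must check carefully that the indexing convention for $\delta$ (ordered versus unordered pairs, and the case $q=q'$ which occurs only once) matches the form of $b$ given in the statement so that the limit is exactly $\sum_{(q,q')} x_q x_{q'}(\dots)$; a secondary point is to justify restricting the analysis to the simplex, which is forward-invariant under both $Y^{(n)}$ and the flow of \eqref{eq:diffusion2}, thereby securing both the Lipschitz bound on $b$ and the uniformity on $\{|x|\le R\}$ required by the theorem.
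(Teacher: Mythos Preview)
Your proposal is correct and follows essentially the same route as the paper: verify the four hypotheses of Theorem \ref{th:theo1} with $a\equiv 0$ by exploiting that the one-step jump is $O(1/n)$, compute $b^{(n)}$ from the pairing probabilities $\frac{n_q(n_{q'}-\mathbf{1}_{q=q'})}{n(n-1)}$ and let $n\to\infty$ to recover $b$, then invoke uniqueness of the deterministic ODE solution. The paper's proof is organised the same way (with a cruder jump bound $4/n$ in place of your $2/n$, and without your explicit appeal to Cauchy--Lipschitz or to simplex invariance, which it leaves implicit).
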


\begin{remark}
Ordinary differential equation \eqref{eq:diffusion2} corresponds to a degenerated stochastic differential equation. Being deterministic, we are sure
that it has a unique weak solution for all $x$.
\end{remark}

\begin{proof}
  $Y^{n}_i$ is of the form required by
  Theorem~\ref{th:theo1} with $\pi^{(n)}(x,.)$ being the sum of
  $5^{|Q|}$ Dirac : the variation of the proportion of agents in any
   given state belongs to $\{\frac{-2}{n}, \frac{-1}{n}, 0, \frac{1}{n},
  \frac{2}{n}, \}$ and the probabilities of any of these variations
  are clearly only dependant on the current state $x$.

Now let us define $a^{(n)}(x), b^{(n)}(x), K^{(n)}(x)$ and
$\Delta_{\epsilon}^{(n)}$ as in Theorem~\ref{th:theo1}. Let $R$ be
any finite non-negative real number.

As in the example above, since at any given time step at most two out
of $n$ agents change state, $\Delta^{n}_{\epsilon} =0$ if
$\epsilon>\frac{4}{n}$ and thus $$\lim_{n\to \infty} sup_{|x| \le R} \Delta_\epsilon^{(n)} = 0, \forall \epsilon >0.$$

$$\sup_{|x| \le R} K^{(n)}(x) < \infty$$ is also easy to establish.

Similarly $$ \forall x\in \mathbb{R}^{|Q|}, |x|\le R,  |a^{(n)}(x)|\leq \frac{4|Q|}{n}.$$ So if
we take $a(x) = 0$, we have $$ \lim_{n\rightarrow \infty} sup_{|x|\le
  R} |a^{(n)}(x) - a(x)| = 0.$$

If we write, for all $(q,q') \in Q^2, q \neq q'$, $$\Pi^{(n)}_{q,q'}(x) = x_q x_{q'}\frac{n}{n-1}$$

and 

$$\Pi^{(n)}_{q,q}(x) = x_q x_{q}\frac{n}{n-1} - \frac{x_q}{n-1}  .$$

Then $\Pi^{(n)}_{q,q'}(x)$ is exactly the probability of an encounter
between an agent in state $q$ and an agent in state $q'$ to happen
when the population is in configuration $x$.
We then have :

 $$b^{(n)}(x) = \sum_{(q,q')\in Q²} \Pi^{(n)}_{q,q'}(x) (-(e_q +e_q') +
e_{\delta_1(q,q')} + e_{\delta_2(q,q')}),$$

or

$$b^{(n)}(x) = \frac{n}{n-1} b(x) - \frac{1}{n-1}\sum_{q\in Q} x_q (-2e_q +
e_{\delta_1(q,q)} + e_{\delta_2(q,q)}).$$

Thus, finally, 

$$\lim_ {n\to \infty} sup_{|x| \le R} |b^{(n)}(x)-b(x)| = 0.$$

We can now conclude by Theorem~\ref{th:theo1}.
\end{proof}

This means that to understand the asymptotic behaviour of any such
protocol, we can study the associated differential equation. It is
also of interest to note that the function $b$ defined here is a
quadratic form over $\mathbb{R}^Q$.

\section{An Asymptotic Development of the Example Dynamic}
\label{sec:asympt}

It is actually possible to go further, at least in some cases like the
previous simple example, and prove the equivalent of a
central limit theorem, or if one prefers, to do an asymptotic
development of the convergence, in terms of stochastic processes. We
shall do so for the example dynamic used before using the same
notations as in previous sections. 

In our previous example, as $p(k)$ is expected to converge to $\frac{\sqrt2}{2}$, consider the
following change of variable:

$$Y^{(n)}(k) = \sqrt n(p(k) - \frac{\sqrt{2}}{2}).$$

The subtraction of $\frac{\sqrt2}{2}$ is here to get something
centered, and the $\sqrt n$ factor is here in analogy with classical
central limit theorem.

Clearly, $Y^{(n)}(.)$, that we will also note $Y(.)$ in what follows
when $n$ is fixed, is still an homogeneous Markov Chain.

We have
$$E[ Y(k+1)- Y(k) | Y(k)] = \sqrt n (E[ p(k+1)- p(k) | p(k)]),$$
hence, from \eqref{eq:truc},
$$E[ Y(k+1)- Y(k) | Y(k)] = \frac{1}{\sqrt n} ( 1 - 2 p(k)^2 \frac{n}{n-1}  + p(k) \frac2{n-1}). $$

Using $p(k)= \frac{\sqrt{2}}{2} + \frac{ Y(k)}{\sqrt n}$, 
we get
\begin{eqnarray*}
E[ Y(k+1)- Y(k) | Y(k)] = \tfrac{\sqrt 2 - 1}{\sqrt n (n-1)} + Y(k) (- \tfrac{2 \sqrt 2}{n-1} +\tfrac{2}{n(n-1)}) + Y(k)^2 (- \tfrac{2}{\sqrt n(n-1)})
\end{eqnarray*}
which yields the equivalent
$$n E[ Y(k+1)- Y(k) | Y(k)] \approx - {2 \sqrt 2}\ Y(k) $$
when $n$ goes to infinity. We have
$$E[ (Y(k+1)- Y(k))^2 | Y(k)] = n (E[ (p(k+1)- p(k))^2 | p(k)]),$$
hence, since $(p(k+1)- p(k))^2$ is always $\frac1{n^2}$,
$$n E[ (Y(k+1)- Y(k))^2 | Y(k)] = 1.$$

Set $a(x)= -2 \sqrt 2 x$, $b(x) = 1$. From the above calculations we have clearly 
$$\lim_ {n\to \infty} sup_{|x| \le R} |a^{(n)}(x)-a(x)| = 0$$
$$\lim_ {n\to \infty} sup_{|x| \le R} |b^{(n)}(x)-b(x)| = 0$$
for all $R < +\infty$. Since the jumps of $Y^{(n)}$ are bounded in absolute value by $\frac1{\sqrt n}$, $\Delta_\epsilon^{(n)}$ is null, as soon as $\frac1{\sqrt n}$ is smaller than $\epsilon$, and so
$$\lim_{n\to \infty} sup_{|x| \le R} \Delta_\epsilon^{(n)} = 0, \forall \epsilon >0$$

$$\sup_{|x| \le R} K^{(n)}(x) < \infty$$ is still easy to establish.

Now the stochastic differential equation 
\begin{equation} \label{stocasticeq}
dX(t)  = -2 \sqrt 2 X(t) dt + dB(t)
\end{equation}
is of a well-known type. Actually, for $b>0$ and $\sigma\neq 0$, stochastic differential equations of type 
$$dX(t)  = -b X(t) dt + \sigma dB(t)$$
are known to have a unique solution for all initial conditions 
$X(0)=x$.  This solution is an Orstein-Uhlenbeck process, and is given by (see e.g. \cite{comets2006csm})
$$X(t) = e^{-b t} X(0) + \int_{0}^t e^{-b(t-s)} \sigma dB(s).$$

For all initial conditions
$X(0)$, $X(t)$ is known to converge in law, when $t$ goes to infinity, to the
Gaussian distribution $\mN(0,\frac{\sigma^2}{2b})$. This latter Gaussian distribution is
an invariant distribution for the Orstein-Uhlenbeck process. See for example \cite{comets2006csm}.

We have all the ingredients to apply Theorem \ref{th:theo1} again. We obtain:


\begin{theorem}
For all $t$, $$p(\lfloor nt \rfloor) = \frac{\sqrt2}{2} +
  \frac{1}{\sqrt n} A_n(t),$$ where $(A_n(t))_{t\ge 0}$ converges in law to the
  unique solution of stochastic differential equation \eqref{stocasticeq}. Hence $A_n(t)$ converges to 
 $\mN(0,\frac{\sqrt 2}{8})$ when $t$ and $n$ go to infinity.
\end{theorem}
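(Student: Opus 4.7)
The plan is to apply Theorem~\ref{th:theo1} to the rescaled Markov chain $Y^{(n)}$ and to identify $A_n(t)$ with its linearly interpolated continuous-time version, namely
\[A_n(t) = Y^{(n)}_{\lfloor nt\rfloor} + (nt - \lfloor nt\rfloor)\bigl(Y^{(n)}_{\lfloor nt\rfloor+1} - Y^{(n)}_{\lfloor nt\rfloor}\bigr),\]
which exactly matches the definition of $X^{(n)}$ used in Theorem~\ref{th:theo1}. The identity $p(\lfloor nt \rfloor) = \frac{\sqrt 2}{2} + \frac{1}{\sqrt n} A_n(t)$ then follows from the very definition $Y^{(n)}(k) = \sqrt n (p(k) - \sqrt 2 / 2)$ evaluated at the integer time $k = \lfloor nt \rfloor$ (at which the interpolation contribution vanishes).

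Next, I would collect the four hypotheses of Theorem~\ref{th:theo1}, all of which have been established in the calculations preceding the statement: the drift coefficient of $Y^{(n)}$ converges uniformly on compact sets to $-2\sqrt 2\,x$; the second-moment coefficient equals the constant $1$ (since $(p(k+1)-p(k))^2 = 1/n^2$ always, so after rescaling by $n$ one gets exactly $1$); the concentration coefficient $\Delta_\epsilon^{(n)}$ vanishes for $n$ large, since the jumps of $Y^{(n)}$ are bounded by $1/\sqrt n$ in absolute value; and the third-moment quantity $K^{(n)}$ is controlled by the same jump bound. The remaining ingredient is that the limiting SDE \eqref{stocasticeq} admits a unique weak solution from every initial condition, which is standard for linear SDEs of Ornstein-Uhlenbeck type: one exhibits the explicit strong solution
\[X(t) = e^{-2\sqrt 2\,t}\,X(0) + \int_0^t e^{-2\sqrt 2\,(t-s)}\,dB(s),\]
globally defined and pathwise unique. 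Theorem~\ref{th:theo1} then directly yields convergence in law $A_n \Rightarrow X$ on $\mathcal{C}(\mathbb{R}^+,\mathbb{R})$, which is the first half of the statement.

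For the Gaussian limit, I would invoke the classical convergence of the Ornstein-Uhlenbeck process to its unique invariant distribution: with drift coefficient $b = 2\sqrt 2$ and diffusion coefficient $\sigma = 1$, this invariant measure is $\mN(0, \sigma^2/(2b)) = \mN(0, \sqrt 2 / 8)$, and $X(t)$ converges weakly to it as $t \to \infty$ regardless of $X(0)$. The main point that must be handled with care---and really the only obstacle, since all the microscopic computations are already in place---is the ordering of the two limits: Theorem~\ref{th:theo1} delivers convergence as $n \to \infty$ with $t$ fixed, whereas the Gaussian identification requires $t \to \infty$ afterwards. The assertion ``$A_n(t)$ converges to $\mN(0, \sqrt 2 / 8)$ when $t$ and $n$ go to infinity'' must therefore be read as an iterated rather than a joint limit, and the write-up should make this ordering explicit to avoid any ambiguity about tightness in $t$.
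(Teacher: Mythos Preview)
Your proposal is correct and follows essentially the same route as the paper: the paper's argument consists precisely of the computations in Section~\ref{sec:asympt} (drift $\to -2\sqrt 2\,x$, second moment $\equiv 1$, vanishing $\Delta_\epsilon^{(n)}$ and bounded $K^{(n)}$), followed by the observation that the Ornstein--Uhlenbeck SDE has a unique strong solution, and an appeal to Theorem~\ref{th:theo1}. Your added remark that the Gaussian conclusion should be read as an iterated limit ($n\to\infty$ then $t\to\infty$) is a welcome clarification that the paper leaves implicit; one small imprecision in your write-up is that the interpolation term in $A_n(t)$ does not literally vanish at $k=\lfloor nt\rfloor$ for arbitrary $t$, but the discrepancy between the step function $Y^{(n)}_{\lfloor nt\rfloor}$ and the interpolated process is $O(1/\sqrt n)$ uniformly, so this does not affect the convergence in law.
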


\section{Some Other Algebraic Numbers}\label{sec: otheralg}
We have treated in detail the case of $\sqrt{2}/2$. 
We present in this Section, without proofs, the extension of our result
to 2-states protocols, with pairing of agents.

At each time step, two agents are picked and their states are possibly
changed, according to the fixed rule $\delta$. 
These protocols are completely 
described by the three mean increments
\begin{align*}
\alpha&=\mathbb{E}[n_+(k+1)-n_+(k)|\{+,+\} \mbox{have been picked}],\\
\beta&=\mathbb{E}[n_+(k+1)-n_+(k)|\{+,-\} \mbox{have been picked}],\\
\gamma&=\mathbb{E}[n_+(k+1)-n_+(k)|\{-,-\} \mbox{have been picked}].
\end{align*}
Thus, there are $27=3^3$ different rules. We denote them by the corresponding 
triplet $(\alpha,\beta,\gamma)$. For instance, the rule computing $\sqrt{2}/2$
is denoted by  $(-1,+1,+1)$. We exclude the \emph{identity} rule $(0,0,0)$.

We also set
\begin{align*}
a&=\alpha-2\beta+\gamma, \\
b&=2\beta-2\gamma,\\
c&=\gamma.\\
\end{align*}
Then, we associate to each triplet $(\alpha,\beta,\gamma)$ the polynomial
$$
P=aX^2+bX +c.
$$
The following Lemma, whose proof is omitted, is the basis of the next discussion.
\newtheorem{lem}{Lemma}
\begin{lem}
The polynomial $P=aX^2+bX +c$ admits at most one root in $(0,1)$, which we denote by
$p^\star$. Moreover,
\[
q:=(\alpha^2-2\beta^2 +\gamma^2)(p^\star)^2+(2\beta^2-2\gamma^2)p^\star+\gamma^2>0\]
and
\[
2ap^\star +b<0.
\]
\end{lem}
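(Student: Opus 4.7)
The plan is to rewrite $P$ and $q$ in a Bernstein-type basis over $[0,1]$ and to reduce the study of roots of $P$ in $(0,1)$ to positive roots of an auxiliary quadratic via the change of variables $y=p/(1-p)$. A direct expansion using $a=\alpha-2\beta+\gamma$, $b=2\beta-2\gamma$, $c=\gamma$ shows
$$P(p)=\alpha p^2+2\beta p(1-p)+\gamma(1-p)^2,\qquad q(p)=\alpha^2 p^2+2\beta^2 p(1-p)+\gamma^2(1-p)^2.$$
The inequality $q(p^\star)>0$ is then immediate: for $p^\star\in(0,1)$ the three weights $(p^\star)^2$, $2p^\star(1-p^\star)$, $(1-p^\star)^2$ are all strictly positive, and because the identity rule $(\alpha,\beta,\gamma)=(0,0,0)$ is excluded, at least one of $\alpha^2,\beta^2,\gamma^2$ is nonzero.

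For the uniqueness of $p^\star$, I would use the a priori constraints $\alpha\le 0$ and $\gamma\ge 0$ coming from the probabilistic interpretation of the rule (an interaction between two $+$ agents can only destroy $+$'s, and symmetrically for two $-$ agents). The map $y=p/(1-p)$ is a bijection from $(0,1)$ onto $(0,\infty)$, and a short computation gives $P(p)=Q(y)/(1+y)^2$ with $Q(y)=\alpha y^2+2\beta y+\gamma$, so roots of $P$ in $(0,1)$ correspond bijectively to positive roots of $Q$. In the generic subcase $\alpha<0$, $\gamma>0$, the product $\gamma/\alpha$ of the two roots of $Q$ is negative, so $Q$ has exactly one positive root $y^\star$, which is the larger of the two. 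The degenerate subcases ($\alpha=0$ or $\gamma=0$, but not both by exclusion of the identity) reduce $Q$ either to a linear polynomial or to $y(\alpha y+2\beta)$, and a direct inspection shows at most one positive root in each.

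Finally, differentiating the identity $P(p)(1+y)^2=Q(y)$ at $y^\star$ and using $dy/dp=1/(1-p)^2>0$ yields $P'(p^\star)=Q'(y^\star)/[(1+y^\star)^2(1-p^\star)^2]$, so $2ap^\star+b=P'(p^\star)$ has the sign of $Q'(y^\star)$. In the generic subcase, $y^\star$ is the larger root of a parabola opening downward, so $Q'(y^\star)<0$; equivalently, direct substitution yields $Q'(y^\star)=-2\sqrt{\beta^2-\alpha\gamma}<0$. In the degenerate subcases, the constraint forcing the existence of a positive root pins down the sign of $\beta$, and one checks $Q'(y^\star)=2\beta<0$ or $Q'(y^\star)=-2\beta<0$ accordingly. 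The main obstacle is the case bookkeeping for the degenerate situations where $\alpha$ or $\gamma$ vanishes; the argument is smoothest in the generic regime, where everything reduces to the sign of the product of the roots of $Q$.
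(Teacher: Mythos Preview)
The paper explicitly omits the proof of this lemma, so there is no reference argument to compare against. Your proof is correct: the Bernstein-form rewritings $P(p)=\alpha p^2+2\beta p(1-p)+\gamma(1-p)^2$ and $q(p)=\alpha^2 p^2+2\beta^2 p(1-p)+\gamma^2(1-p)^2$ make $q(p^\star)>0$ immediate once the identity rule is excluded, and the substitution $y=p/(1-p)$ together with the model constraints $\alpha\le 0$, $\gamma\ge 0$ reduces both the root count in $(0,1)$ and the sign of $P'(p^\star)=2ap^\star+b$ to elementary facts about $Q(y)=\alpha y^2+2\beta y+\gamma$. One cosmetic remark: since $(1+y^\star)(1-p^\star)=1$, your derivative formula simplifies to $P'(p^\star)=Q'(y^\star)$ exactly, not merely up to a positive factor.
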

We will see that the computational power of a protocol population reads on the
corresponding polynomial $P$.

\vspace{8mm}

\noindent{\bf Case 1: $P$ has no root in (0,1). Monotonic convergence. }
\newcommand{\set}[1]{\left\{#1\right\}}

For $10$ rules, $P$ does not admit a root in $(0,1)$. In this case, the convergence of the
corresponding protocol is easy to establish. Take for instance $(0,1,2)$ :
\begin{align*}
\begin{cases}
++&\mapsto ++\\
+-&\mapsto ++\\
--&\mapsto ++
\end{cases}
\end{align*}
It is clear that the protocol converges to the configuration $\set{+}^n$. 
We summarize the behaviors of the $9$ remaining rules in the following table.
\newcommand{\ph}{\phantom}
\begin{center}
\begin{tabular}{| p{8mm} |  p{8mm} |  p{8mm} | p{75mm} |}
\hline
$\ph{-}\alpha$ & $\ph{-}\beta$ & $\ph{-}\gamma$ & Convergence\\
\hline
\hline
 $\ph{-}0$     &   $\ph{-}1$   &   $\ph{-}0$    &   $\set{+}^n$ (or $\set{-}^n$ if it is the init. config.)\\  
 $\ph{-}0$     &   $\ph{-}1$   &   $\ph{-}1$    &   $\set{+}^n$    \\  
 $\ph{-}0$     &   $\ph{-}1$   &   $\ph{-}2$    &   $\set{+}^n$    \\  
 $\ph{-}0$     &   $\ph{-}0$   &   $\ph{-}1$    &   $\set{-}\set{+}^{n-1}$    \\  
 $\ph{-}0$     &   $\ph{-}0$   &   $\ph{-}2$    &   $\set{+}^n$ or $\set{-}\set{+}^{n-1}$     \\  
 $\ph{-}0$     &   $-1$   &   $\ph{-}0$    &   $\set{-}^n$  (or $\set{+}^n$  if it is the init. config.) \\  
 $-1$     &   $\ph{-}0$   &   $\ph{-}0$    &   $\set{+}\set{-}^{n-1}$    \\  
 $-1$     &   $-1$   &   $\ph{-}0$    &   $\set{-}^n$    \\  
 $-2$     &   $\ph{-}0$   &   $\ph{-}0$    &   $\set{-}^n$ ou  $\set{+}\set{-}^{n-1}$   \\  
 $-2$     &   $-1$   &   $\ph{-}0$    &   $\set{-}^n$    \\  
\hline
\end{tabular}
\end{center}

\vspace{8mm}

\noindent{\bf Case 2: $P$ has a unique root in (0,1). Approximation with a diffusion. }
Depending on $(\alpha,\beta,\gamma)$, $p^\star$ has one of the three 
following expressions:
$$
p^\star=
\begin{cases}
&\frac{-b+\sqrt{b^2-4ac}}{2a},\\
&\frac{-b-\sqrt{b^2-4ac}}{2a},\\
&-\frac{c}{b}.
\end{cases}
$$
As in the previous section, we set
\begin{displaymath}
Y_k=Y^{(n)}_k:=\sqrt{n}(p^{(n)}_k- p^\star),
\end{displaymath}
and $X$ is the linear interpolation of $Y$:
$$
X^{(n)}(t) = Y^{(n)}_{\lfloor nt \rfloor}  + (nt - \lfloor nt \rfloor) (Y^{(n)}_{\lfloor nt+1 \rfloor} - Y^{(n)}_{\lfloor nt \rfloor}).
$$
\begin{theorem}\label{Th:EDS}
Assume that $p_0^{(n)}$ converges to a random variable $X_0$ in $(0,1)$.
When $n$ goes to infinity, the process $\left(X^{(n)}(t)\right)_{t\geq 0}$
converges to the unique (weak) solution $X$ of the stochastic differential equation
\begin{equation}\label{Eq:EDS}
dX_t=(2ap^\star +b)X_t dt +qdB_t
\end{equation}
starting at $X_{0}$. Since, by the previous Lemma, $2ap^\star +b<0$ and $q>0$,
this solution $X$ is an Ornstein-Uhlenbeck process, and has the following representation:
\begin{displaymath}
X_t=X_0e^{(2ap^\star +b)t} +q\int_0^t \exp^{(2ap^\star +b)(t-s)}dB_s.
\end{displaymath}
\end{theorem}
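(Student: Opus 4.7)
The plan is to apply Theorem~\ref{th:theo1} to the rescaled Markov chain $Y^{(n)}_k := \sqrt{n}(p^{(n)}_k - p^\star)$, following exactly the pattern used in Section~\ref{sec:asympt} for the $\sqrt{2}/2$ example.

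First, I would compute the one-step conditional mean of $p_k$. Sampling two agents uniformly without replacement yields the pair probabilities $\mathbb{P}(\{+,+\}\mid p) = \tfrac{n_+(n_+-1)}{n(n-1)}$, $\mathbb{P}(\{+,-\}\mid p) = \tfrac{2n_+ n_-}{n(n-1)}$, $\mathbb{P}(\{-,-\}\mid p) = \tfrac{n_-(n_--1)}{n(n-1)}$, and conditioning on the sampled pair gives
$$n\,\mathbb{E}[p_{k+1}-p_k\mid p_k=p] \;=\; \tfrac{n}{n-1}P(p) \;-\; \tfrac{1}{n-1}\,r(p),$$
where $r$ is a linear polynomial coming from the sampling-without-replacement correction. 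The leading term is precisely $P(p) = ap^2 + bp + c$ with the $a,b,c$ of the statement.

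Next, substituting $p = p^\star + y/\sqrt n$ and using that $P$ is exactly quadratic with $P(p^\star)=0$, the Taylor identity $P(p^\star + y/\sqrt n) = (2ap^\star + b)\tfrac{y}{\sqrt n} + a\tfrac{y^2}{n}$ is exact. Multiplying by $\sqrt n$ then shows $b^{(n)}(y) = n\,\mathbb{E}[Y_{k+1}-Y_k \mid Y_k=y] \to (2ap^\star + b)\,y$, uniformly on $|y|\le R$ for every $R<\infty$. The analogous computation for the second moment, using that the jump of $p$ takes one of the values $\alpha/n,\beta/n,\gamma/n$, gives $a^{(n)}(y) \to \alpha^2 (p^\star)^2 + 2\beta^2 p^\star(1-p^\star) + \gamma^2 (1-p^\star)^2 = q$, uniformly on $|y|\le R$ since $p\to p^\star$ on that set. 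The jumps of $Y^{(n)}$ are bounded in absolute value by $2/\sqrt n$, so $\Delta^{(n)}_\epsilon \equiv 0$ as soon as $n > 4/\epsilon^2$, and the boundedness of $K^{(n)}$ is immediate from the jump bound.

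Finally, the limiting equation \eqref{Eq:EDS} is an affine SDE with Lipschitz drift and constant diffusion, so it admits a unique strong solution for every initial condition. Theorem~\ref{th:theo1} then delivers the convergence in law of $X^{(n)}$ to this solution. The sign conditions $2ap^\star + b < 0$ and $q>0$ guaranteed by the Lemma identify the limit as a genuine mean-reverting Ornstein--Uhlenbeck process, and the explicit representation stated in the theorem is then the standard variation-of-constants formula. The only mildly delicate point -- really just bookkeeping -- is to check that after multiplication by $\sqrt n$ the term $a y^2/n$ in the Taylor expansion and the $-\tfrac{1}{n-1}r(p)$ correction both contribute uniformly vanishing $O(1/\sqrt n)$ residuals on $|y|\le R$; since $P$ is exactly quadratic there is no Taylor remainder, and these checks are identical in nature to those carried out for the $(-1,+1,+1)$ protocol in Section~\ref{sec:asympt}.
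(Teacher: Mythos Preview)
Your proposal is correct and follows exactly the route the paper intends: the paper's own proof is simply ``We omit the proof, for it extends easily from the case $\sqrt{2}/2$,'' and what you have written is precisely that extension, verifying the drift, diffusion, jump, and third-moment hypotheses of Theorem~\ref{th:theo1} for the rescaled chain $Y^{(n)}_k=\sqrt{n}(p^{(n)}_k-p^\star)$ via the quadratic structure of $P$. One small wrinkle worth noting (inherited from the paper's statement rather than from your argument) is that your computation gives $a^{(n)}(y)\to q$, so Theorem~\ref{th:theo1} yields diffusion coefficient $\sigma=\sqrt{q}$; in the special case $(-1,+1,+1)$ one has $q=1$ so this is invisible there.
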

\begin{proof}
We omit the proof, for it extends easily from the case $\sqrt{2}/2$.
\end{proof}
There are $16$ rules for which $P$ has a root in $(0,1)$. They compute $13$ different
algebraic numbers, as shown in the next table.
\begin{center}
\newcommand{\Poly}[3]{
#1X^2+#2X+#3
}
\begin{tabular}{| p{8mm} |  p{8mm} |  p{8mm} | r | l |}
\hline
$\ph{-}\alpha$ & $\ph{-}\beta$ & $\ph{-}\gamma$ & Polynomial $P$ & $p^\star$ \\
\hline
\hline
 $\ph{-}0$     &   $-1$   &   $\ph{-}1$    &  $3X^2-4X+1$      & $1/3$ \\  
 $\ph{-}0$     &   $-1$   &   $\ph{-}2$    &  $4X^2-6X+2$      & $1/2$ \\  
 $-1$     &   $\ph{-}1$   &   $\ph{-}0$    &  $-3X^2+2X$           & $2/3$ \\  
 $-1$     &   $\ph{-}1$   &   $\ph{-}1$    &   $-2X^2+1$          & $\sqrt{2}/2$ \\  
 $-1$     &   $\ph{-}1$   &   $\ph{-}2$    &   $-X^2-2X+2$      & $\sqrt{3}-1$ \\  
 $-1$     &   $\ph{-}0$   &   $\ph{-}1$    &   $-2X+1$          & $1/2$ \\  
 $-1$     &   $\ph{-}0$   &   $\ph{-}2$    &   $X^2-4X+2$         &  $2-\sqrt{2}$\\  
 $-1$     &   $-1$   &   $\ph{-}1$    &   $2X^2-4X+1$    & $1-\sqrt{2}/{2}$\\  
 $-1$     &   $-1$   &   $\ph{-}2$    &   $3X^2-6X+2$        & $1-\sqrt{3}/{3}$\\  
 $-2$     &   $\ph{-}1$   &   $\ph{-}0$    &   $-4X^2+2X$           & $1/2$\\  
 $-2$     &   $\ph{-}1$   &   $\ph{-}1$    &   $-3X^2+1$           & $\sqrt{3}/{3}$\\  
 $-2$     &   $\ph{-}1$   &   $\ph{-}2$    &   $-2X^2-2X+2$          & $(\sqrt{5}-1)/2$ \\  
 $-2$     &   $\ph{-}0$   &   $\ph{-}1$    &   $-X^2-2X+1$         &  $\sqrt{2}-1$ \\  
 $-2$     &   $\ph{-}0$   &   $\ph{-}2$    &   $-4X+2$          & $1/2$\\  
 $-2$     &   $-1$   &   $\ph{-}1$    &   $X^2-4X+1$          & $2-\sqrt{3}$ \\  
 $-2$     &   $-1$   &   $\ph{-}2$    &   $2X^2-6X+2$         &  $(3-\sqrt{5})/2$ \\  
\hline
\end{tabular}
\end{center}

\section{Conclusion}
\label{sec:concl}

Population protocols have been introduced in
\cite{AspnesADFP2004} as a model for sensor networks. 
In this paper we considered population protocols with a huge
population hypothesis. 

Whereas for the definitions of computability considered in
\cite{AspnesADFP2004}, some population protocols are not considered as
(stably) convergent, we proved through an example that sometimes they
actually compute in some natural sense some irrational algebraic
values: indeed, we gave a simple example where the proportion of agents
in state $+$ converges to $\frac{\sqrt 2}{2}$, whatever the initial
state of the system is.

One aim of this paper was to formalize the proof of convergence. We
did it on this example and on general population protocols using a
diffusion approximation technique, using a theorem due to
\cite{stroock1979mdp}. 

We consider this work as a first step towards understanding which
numbers can be computed by such protocols in this sense. We gave some
preliminary results about numbers computable with $2$-states
protocols.

Considering more states, it is easy to derive from the protocol
considered here another protocol that would compute $\sqrt {\sqrt
  \frac12}$, by working with an alphabet made of pairs of
states. Whereas it is easy to see that computable numbers in this
sense must be algebraic numbers of $[0,1]$, we didn't succeed yet to
characterize precisely computable numbers.

\end{document}